\documentclass[11pt]{article}
\usepackage{typearea}\typearea{12}
\usepackage{amsmath,amsfonts,amssymb}
\usepackage{hyperref}
\hypersetup{hidelinks,
 pdftitle={The equivalence of topological indices for U(1) semidirect Z2-symmetric quantum spin chains},
 pdfauthor={Hal Tasaki}}

\newtheorem{theorem}{Theorem}[section]
\newtheorem{lemma}[theorem]{Lemma}
\newtheorem{corollary}[theorem]{Corollary}
\newtheorem{assumption}[theorem]{Assumption}
\newtheorem{proposition}[theorem]{Proposition}

\makeatletter
\@addtoreset{equation}{section}
\makeatother

\def\eq#1\en{\begin{equation}#1\end{equation}}
\def\eqa#1\ena{\begin{align}#1\end{align}}
\def\eqg#1\eng{\begin{gather}#1\end{gather}}
\newcommand{\lb}[1]{\label{e:#1}}
\newcommand{\rlb}[1]{\eqref{e:#1}}
\newcommand{\nl}{\notag\\}

\newcommand{\norm}[1]{\left\Vert#1\right\Vert}
\newcommand{\snorm}[1]{\Vert#1\Vert}

\newcommand{\bbC}{\mathbb{C}}
\newcommand{\bbR}{\mathbb{R}}
\newcommand{\bbZ}{\mathbb{Z}}
\newcommand{\bbN}{\mathbb{N}}
\newcommand{\calA}{\mathfrak{A}}
\newcommand{\calH}{\mathcal{H}}
\newcommand{\calK}{\mathcal{K}}

\newcommand{\calD}{\mathcal{D}}
\newcommand{\Aloc}{\calA_{\rm loc}}
\newcommand{\AL}{\calA_{\rm L}}
\newcommand{\AR}{\calA_{\rm R}}
\newcommand{\mrx}{\mathrm{x}}
\newcommand{\mry}{\mathrm{y}}
\newcommand{\mrz}{\mathrm{z}}
\newcommand{\Sx}{S^{\mrx}}
\newcommand{\Sy}{S^{\mry}}
\newcommand{\Sz}{S^{\mrz}}
\newcommand{\bsS}{\boldsymbol{S}}
\newcommand{\IndT}{\operatorname{Ind}_{\rm T}}
\newcommand{\IndO}{\operatorname{Ind}_{\rm O}}
\newcommand{\Ind}{\operatorname{Ind}}
\newcommand{\Ad}{\operatorname{Ad}}

\newcommand{\id}{\operatorname{id}}
\newcommand{\one}{\hat{1}}
\newcommand{\slim}{\operatorname*{s-lim}}
\newcommand{\Uone}{\mathrm{U}(1)}

\newcommand{\Gsym}{\Uone\rtimes\bbZ_2}
\newcommand{\qedm}{\rule{1.5mm}{3mm}}

\begin{document}

\noindent
{\Large\bf
The equivalence of topological indices for
$\Gsym$-symmetric quantum spin chains}

\renewcommand{\thefootnote}{\fnsymbol{footnote}}
\medskip\noindent
Hal Tasaki\footnote{Department of Physics, Gakushuin University,
Mejiro, Toshima-ku, Tokyo 171-8588, Japan.
\\E-mail: \texttt{hal.tasaki@gmail.com}.}
\renewcommand{\thefootnote}{\arabic{footnote}}
\setcounter{footnote}{0}

\begin{quotation}
\small\noindent
Symmetry-protected topological (SPT) phases in quantum spin chains
are distinguished by topological indices.
Ogata's operator-algebraic indices apply to a broad range of symmetry
classes without requiring continuous symmetry, whereas the elementary
twist index of Tasaki requires $\Uone$ symmetry.
We prove their equivalence in the common setting of integer-spin
chains with on-site $\Gsym$ symmetry and a symmetric locally-unique
gapped ground state, without assuming translation invariance or a
matrix product representation.
A charge-fluctuation estimate yields a strong-limit construction of
half-chain rotations with their phases fixed.
An exact group commutator identity then identifies the Ogata index with the
limit of local twist expectations and gives an explicit error bound.
This provides a computable formula when a lower bound on the gap and
the relevant ground-state expectations are available.
As an application, a previous twist-index calculation determines the
Ogata index to be $(-1)^S$ for the antiferromagnetic Heisenberg chain
under explicit uniqueness and boundary-field gap assumptions.

The author formulated the problem and proposed the basic strategy,
while ChatGPT made substantial contributions to the development of
the proof.
\end{quotation}

\setcounter{tocdepth}{2}
\tableofcontents
\medskip

\section{Introduction}\label{S:introduction}
Haldane's prediction of a unique gapped ground state in integer-spin
antiferromagnetic chains \cite{Haldane1981,Haldane1983a,Haldane1983b},
and the subsequent work it inspired, opened the way to the study of
quantum phases not characterized by conventional symmetry breaking.
The Affleck--Kennedy--Lieb--Tasaki (AKLT) models provided rigorously
tractable examples of such gapped ground states \cite{AKLT1988}.
The AKLT construction was also one of the motivations for the work
of Fannes, Nachtergaele, and Werner \cite{FNW1989,FNW1992}, who
introduced matrix product states under the name ``finitely correlated
states'' and proved fundamental theorems about their structure and
properties.
The effective boundary spins, hidden antiferromagnetic order, and
hidden symmetry breaking of the AKLT ground states were early indications of
the structure underlying the Haldane phase
\cite{AKLT1988,denNijsRommelse,KennedyTasaki1992,Oshikawa1992}.

Gu and Wen subsequently identified the Haldane phase as a
symmetry-protected topological (SPT) phase \cite{GuWen2009}.
Pollmann, Turner, Berg, and Oshikawa clarified its protecting
symmetries and characterized it by projective representations on
the auxiliary space of a matrix product state
\cite{Pollmann2010,Pollmann2012}.
Together with the analysis of string order and symmetry
\cite{PG2008} and the classification results based on matrix product
states \cite{ChenGuWen2011,Schuch2011}, these developments established
a systematic framework for understanding one-dimensional SPT phases.


To characterize SPT phases of general infinite-volume ground states,
the indices must be defined without assuming an exact matrix product
representation.
An earlier approach in this direction was developed by Bachmann and
Nachtergaele \cite{BachmannNachtergaele2014}, who studied boundary
symmetry representations as invariants of gapped phases and
constructed excess-spin operators for matrix product states and
for a class of antiferromagnetic chains with a loop representation.
The two bulk-state indices compared in the present paper were
introduced in 2018: the twist index of Tasaki \cite{Tasaki2018} and,
subsequently, the operator-algebraic indices of Ogata
\cite{Ogata2020}.
Their definitions require neither a matrix product representation
nor a loop representation of the ground state.
They differ substantially in their symmetry requirements and in
how their values can be determined.

Ogata's operator-algebraic approach provides the more general
framework. It extends the projective-representation indices of
Pollmann, Turner, Berg, and Oshikawa beyond matrix product states
for a broad range of symmetry classes.
Matsui's work on the split property and symmetry of spin chains
\cite{Matsui2001}, and his theorem deriving the split property from
a bulk gap \cite{Matsui2013}, provide the essential foundation.
Entanglement bounds play a central role in this development
\cite{Matsui2013,Hastings2007}.
For a pure symmetric split state, half-chain symmetries are
implemented by a projective unitary representation.
Ogata used this structure to define topological indices and prove
their invariance under symmetric gapped deformations
\cite{Ogata2020}; a classification of pure symmetric split states
was obtained in \cite{Ogata2021}.
The construction applies to arbitrary finite on-site symmetry groups,
including purely discrete groups.
The operator-algebraic framework also provides indices for
time-reversal and reflection symmetries; see \cite{OgataSPT,OgataICM} for overviews.
For the related theory of automorphic equivalence within gapped
phases, see \cite{BMNS2012,MoonOgata2020}.

By contrast, the twist index $\IndT$ has a more restricted scope,
but admits an elementary expression in terms of a local observable.
It requires a continuous $\Uone$ rotation symmetry together with a
suitable discrete symmetry.
The twist operator of Lieb, Schultz, and Mattis \cite{LSM1961}, and
its local version used by Affleck and Lieb \cite{AffleckLieb1986},
produce states with small excess energy.
Nakamura and Todo observed that twist expectations distinguish
valence-bond-solid states \cite{NakamuraTodo2002}.
Building on these observations, Tasaki used a local twist to define
a $\bbZ_2$ index for general gapped ground states and prove its
invariance \cite{Tasaki2018}.
The discrete symmetry makes the expectation real, and the gap
prevents it from vanishing for a sufficiently slow twist.
See also \cite{TasakiLocalLSM,Tasaki2023}.

This construction has a computational advantage, not only a
conceptual one: the local observable whose expectation determines
the index is given explicitly.
Huang, Kueng, Torlai, Albert, and Preskill exploited this feature in
their study of efficient learning of quantum phases
\cite[Appendix~K, Theorem~8]{Huang2022}.
For finite periodic $O(2)$-symmetric spin-one chains with controlled
interaction bounds and a common positive gap lower bound, they used
a local twist to obtain a finite-support observable whose expectation
has the same sign as the global twist and stays a fixed distance
from zero.
This connects the twist construction to phase classification from
local measurement data.

In this paper we restrict the operator-algebraic construction to
integer-spin chains with on-site $\Gsym$ symmetry, for which the
twist construction is available, and prove that the two indices
coincide.
We write $\IndO$ for the \emph{Ogata index} or the \emph{operator-algebraic index}.
Here $\IndO$ is the projective commutator sign for the subgroup
$\bbZ_2\times\bbZ_2$ of $\pi$ rotations about two orthogonal axes.
The comparison concerns this common symmetry class; it does not
extend the twist construction to all symmetry classes covered by
the operator-algebraic approach.
The equivalence makes the Ogata index computable from a
finite-interval expectation in the infinite-volume ground state,
given a lower bound on the gap, with an explicit error bound.
It requires neither constructing the half-chain representation nor
finding a gapped interpolation to an exactly solvable model.
The separate tasks of preparing the ground state, certifying the
gap, and controlling finite-volume errors are not addressed by this
statement.
As an application, the twist-index calculation in \cite{Tasaki2025}
determines $\IndO$ for the Heisenberg chain under the assumptions
stated below.

Our proof uses regularized half-chain charges of the type considered
by Bachmann and Nachtergaele \cite{BachmannNachtergaele2014}.
The bulk gap yields an estimate on the difference of two charge
cutoffs, which gives strong convergence of the associated rotation
unitaries with their phases fixed.
An exact identity then relates a half-chain projective commutator
to a one-sided twist.
We state the results in Section~\ref{S:setting}, prove the main
theorem in Section~\ref{S:proof}, and discuss extensions in
Section~\ref{S:discussion}.

\medskip\noindent
{\em Remark:}\/
Within the symmetry class considered here, the equivalence can also
be deduced from the strong-limit construction of Bachmann and
Nachtergaele \cite[Theorems~3.1 and 3.4]{BachmannNachtergaele2014}
for the matrix product states and random-loop models treated there.
Indeed, their regularized half-chain rotations at angle $2\pi$
can be identified, up to taking adjoints and truncating the
exponential cutoff, with local twists whose profiles differ from
the one used here.
Combining their convergence result with the group commutator
identity and the usual twist-deformation argument gives the
equality of the two indices.
The present proof establishes the required strong convergence
directly from the bulk gap and the stated symmetries, without
assuming a matrix product or random-loop representation.

\section{Setting and main results}\label{S:setting}
\subsection{The spin chain and its ground state}
We use the standard operator-algebraic formulation of quantum spin
systems on the infinite chain; see, e.g., Bratteli and Robinson
\cite{BR1,BR2} and Naaijkens \cite{Naaijkens2017}.

Fix an integer $S\ge1$ and associate $\bbC^{2S+1}$ with each site of
$\bbZ$.
For finite $\Lambda\subset\bbZ$, let
$\calA_\Lambda=\bigotimes_{j\in\Lambda}M_{2S+1}(\bbC)$, with the
natural inclusions obtained by tensoring with identities.
The local algebra $\Aloc$ is their union, and $\calA$ is its norm
completion. For arbitrary $\Gamma\subset\bbZ$, $\calA_\Gamma$ denotes
the corresponding quasi-local subalgebra; intervals such as $(0,\infty)$ in subscripts
are understood to be intersected with $\bbZ$.
Write $\bsS_j=(\Sx_j,\Sy_j,\Sz_j)$ for the spin operators, normalized
by $\bsS_j^2=S(S+1)\one$.
In particular, $\snorm{\Sz_j}=S$ and
\eq
e^{-2\pi i\Sz_j}=\one.
\lb{integer}
\en

For local $A\in\calA_\Lambda$, define the global rotation by
\eq
 \beta_{\mrz,\theta}(A)
   =\Ad\biggl(\prod_{j\in\Lambda}e^{-i\theta\Sz_j}\biggr)(A),
 \qquad \Ad(U)(A)=UAU^*.
 \lb{beta-z}
\en
Let $\beta_{\mrx}$ be the analogous $\pi$ rotation about the $\mrx$
axis. The relations $\beta_{\mrz,\theta+2\pi}=\beta_{\mrz,\theta}$,
$\beta_{\mrx}^2=\id$, and
$\beta_{\mrx}\beta_{\mrz,\theta}\beta_{\mrx}=\beta_{\mrz,-\theta}$
define the on-site action of $\Gsym$.
The commuting involutions $\beta_{\mrx}$ and
$\beta_{\mrz}:=\beta_{\mrz,\pi}$ generate
$D_2\cong\bbZ_2\times\bbZ_2$.
Reflection and time reversal are not included in this notation.

We consider a uniformly bounded finite-range interaction, written
as the formal Hamiltonian
\eq
 H=\sum_{j\in\bbZ}h_j,\qquad
 h_j=h_j^*\in\calA_{[j-r,j+r]},\qquad \snorm{h_j}\le J,
 \lb{H}
\en
with fixed $r\in\bbN$ and $J>0$, such that
\eq
\biggl[h_j,\sum_{k=j-r}^{j+r}\Sz_k\biggr]=0,
 \qquad \beta_{\mrx}(h_j)=h_j.
 \lb{local-symmetry}
\en
Translation invariance is not assumed.
For local $A$, the commutator $[H,A]=\sum_j[h_j,A]$ is a finite sum.
A state $\omega$, namely a positive normalized linear functional on
$\calA$, is a ground state if $\omega(A^*[H,A])\ge0$ for every
$A\in\Aloc$.
It is then stationary: $\omega([H,A])=0$.
Our gap condition is
\eq
 \omega(A^*[H,A])\ge
 \gamma\{\omega(A^*A)-|\omega(A)|^2\},\qquad A\in\Aloc,
 \lb{gap}
\en
for some $\gamma>0$.
A ground state satisfying \rlb{gap} is called a \emph{locally-unique
gapped ground state}; see \cite[Section~2.1]{TasakiLSMReview}.
In the Gelfand--Naimark--Segal (GNS) representation, this says that the
ground-state vector spans the zero-energy space of the bulk Hamiltonian
and the remaining spectrum lies in $[\gamma,\infty)$.
A locally-unique gapped ground state is automatically pure; see
\cite[Theorem~A.3]{TasakiLSMReview}.
Local uniqueness does not assert uniqueness among all infinite-volume
ground states.

\begin{assumption}\label{A:state}
The state $\omega$ is a locally-unique gapped ground state of
\rlb{H}, with gap lower bound $\gamma$ as in \rlb{gap}, and is
invariant under $\beta_{\mrz,\theta}$ for all $\theta\in\bbR$ and
under $\beta_{\mrx}$.
\end{assumption}
Symmetry holds automatically for a unique ground state of the
symmetric interaction. We assume it explicitly here because local
uniqueness alone does not imply symmetry.
Spin-flip symmetry implies
\eq
\omega(\Sz_j)=0,\qquad j\in\bbZ.
 \lb{zero-mean}
\en

\subsection{The Ogata index and the twist index}
\label{s:indices}
Let $(\calH,\pi,\Omega)$ be the GNS triple of $\omega$, and put
$\AL=\calA_{(-\infty,-1]}$ and $\AR=\calA_{[0,\infty)}$.
For $g\in D_2$, let $\beta_g^{\rm R}$ act as the identity on $\AL$
and as $\beta_g$ on $\AR$.
We define $\beta_{\mrz,\theta}^{\rm R}$ analogously for every
$\theta\in\bbR$.
Matsui's theorem gives the split property \cite{Matsui2013}.
For a pure state, the split property is equivalent to a tensor-product
realization of the GNS representation: there are irreducible
representations of $\AL$ and $\AR$ on Hilbert spaces
$\calK_{\rm L}$ and $\calK_{\rm R}$ whose tensor product is unitarily
equivalent to the GNS representation; see
\cite[Proposition~1.18]{OgataSPT}.
Since the bicommutant of an irreducible representation is the full
algebra of bounded operators, we may identify
$\calH\cong\calK_{\rm L}\otimes\calK_{\rm R}$ so that
\eq
 \pi(\AL)''=B(\calK_{\rm L})\otimes\one,\qquad
 \pi(\AR)''=\one\otimes B(\calK_{\rm R}).
 \lb{split}
\en
Here $B(\calK)$ denotes the algebra of bounded operators on $\calK$.
The double prime denotes the bicommutant; the resulting algebras are
von Neumann algebras.
The GNS vector $\Omega$ need not be a product vector in this decomposition.
The symmetry of $\omega$ induces a normal automorphism of each
half-chain algebra; every automorphism of $B(\calK_{\rm R})$ is inner.
There are therefore unitaries $u_g$ such that
\eq
 u_g\pi(A)u_g^*=\pi(\beta_g^{\rm R}(A)),\qquad A\in\calA.
 \lb{implementer}
\en
They can be chosen as $\one\otimes v_g$ in \rlb{split}, so this
whole-chain GNS formulation gives the same index as the half-chain
factor formulation \cite{Ogata2020,Ogata2021,OgataICM}.

Purity implies $\pi(\calA)'=\bbC\one$, so the implementers are unique
up to phase and satisfy $u_gu_h=\sigma(g,h)u_{gh}$ with
$\sigma(g,h)\in\Uone$.
The two classes of $H^2(D_2,\Uone)\cong\bbZ_2$ are distinguished by
\eq
 u_{\mrz}u_{\mrx}u_{\mrz}^*u_{\mrx}^*
       =\IndO(\omega)\one,\qquad \IndO(\omega)\in\{-1,1\}.
 \lb{IndO}
\en
The group commutator implements the identity and is a scalar; since
$u_{\mrz}^2$ is also a scalar, the commutator scalar has square one.
This phase-independent sign is the \emph{Ogata index}.
Our proof uses only the abstract existence of $u_{\mrx}$; it constructs
$u_{\mrz}$ from local rotations.

For $N\in\bbN$, define the twist operator
\eq
 T_N=\exp\biggl[-\frac{2\pi i}{N}\sum_{j=0}^{N}j\Sz_j\biggr].
 \lb{one-sided-twist}
\en
The twist operator rotates spins on sites $j=1,\ldots,N-1$.
Its endpoint factors at $j=0$, $N$ are the identity by \rlb{integer}.
Since $\beta_{\mrx}(T_N)=T_N^*$, spin-flip symmetry implies that
$\omega(T_N)$ is real.
The twist-energy estimate and gap make its magnitude tend to one and
its sign constant for large $N$ \cite{Tasaki2018}.
The \emph{twist index} is then defined as
\eq
 \IndT(\omega)=\lim_{N\uparrow\infty}\omega(T_N)\in\{-1,1\}.
 \lb{IndT}
\en
We will also verify this limit in the present setting.

\medskip\noindent
{\em Remark:}\/ In our previous works \cite{Tasaki2018,Tasaki2025},
we used twist profiles centered around the origin, but they give the
same index as \rlb{IndT}.
Indeed, because $e^{-2\pi i\Sz_j}=\one$ at every site, shifting the
twist profile continuously along the chain gives a norm-continuous
deformation, while the usual twist-energy bound remains uniformly
small when the twist is spread over sufficiently many sites.
The gap and spin-flip symmetry then prevent its real expectation from
changing sign. This is the deformation argument of \cite{Tasaki2018};
independence of the twist profile is stated explicitly in
\cite[Theorem~3.1 and Section~4.1]{Tasaki2023} for the fermionic index.
No translation invariance of the state is required.

\subsection{Equivalence and applications}
The following is our main theorem.
\begin{theorem}\label{T:main}
For the interaction \rlb{H} satisfying \rlb{local-symmetry} and a
state $\omega$ satisfying Assumption~\ref{A:state},
\eq
 \IndT(\omega)=\IndO(\omega),\qquad
 \slim_{N\uparrow\infty}\pi(T_N)=\IndO(\omega)\one.
 \lb{equivalence}
\en
\end{theorem}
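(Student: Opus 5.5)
The plan is to build $u_{\mrz}$ concretely as a strong limit of local half-chain rotations, and then to identify the twist $T_N$ with a group commutator of $u_{\mrx}$ and that limit. Consider the regularized half-chain charges
\[
Q_N=\sum_{j\ge0} f_N(j)\Sz_j,
\]
with $f_N(j)=1$ for $0\le j\le N$ and $f_N(j)=0$ for $j>2N$, and linear in between. We could equally use the sharp cutoff $Q_N=\sum_{j=0}^{N}\Sz_j$; the choice will be made to match the twist in the last step. Put $U_N(\theta)=\pi(e^{-i\theta Q_N})$. For every local $A$, $\Ad(U_N(\theta))\pi(A)=\pi(\beta^{\rm R}_{\mrz,\theta}(A))$ once $N$ is large. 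So any strong limit of $U_N(\pi)$ implements $\beta_{\mrz}^{\rm R}$, and we may take it as $u_{\mrz}$. The fixed-phase strong convergence will follow from a \emph{charge-fluctuation estimate}: for $M>N$,
\[
\norm{\pi(Q_M-Q_N)\Omega}^2=\omega\bigl((Q_M-Q_N)^2\bigr)\le C,
\]
and more sharply this quantity tends to zero along a suitable choice of profiles. To prove it, use \rlb{zero-mean} together with the gap inequality \rlb{gap} applied to $A=Q_M-Q_N$. The commutator $[H,Q_M-Q_N]$ is supported only near the ramp regions, because $\sum_k\Sz_k$ commutes with each $h_j$. Hence $\omega(A^*[H,A])$ is bounded by $O(J r^2 S^2/N)$ for linear ramps of length $\sim N$. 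This yields $\omega(A^2)\le C/(\gamma N)$.

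Next we transfer this from $\Omega$ to all of $\calH$. Since $e^{-i\pi Q_M}e^{i\pi Q_N}=e^{-i\pi(Q_M-Q_N)}$, we have
\[
\norm{(U_M-U_N)\Omega}\le\pi\norm{\pi(Q_M-Q_N)\Omega}\to0.
\]
For vectors $\pi(B)\Omega$ with $B$ local, $U_M$ and $U_N$ act identically on $B$ up to $\beta^{\rm R}$, giving $U_M\pi(B)\Omega=\pi(\beta^{\rm R}_{\mrz,\pi}(B))U_M\Omega$. So the Cauchy property extends to a dense set, and hence, by uniform boundedness, to all vectors. Thus $u_{\mrz}:=\slim U_N(\pi)$ exists and is unitary. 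The main obstacle I expect is making the fluctuation bound genuinely go to zero rather than merely stay bounded. The local-charge difference is not itself of the form $A^*[H,A]$-small without care, so the commutator must be computed exactly as a sum of ramp-slope times boundary currents, $\sum_j (f_M-f_N)$-differences times $[h_j,\Sz_k]$. The key point to verify is that each slope is $O(1/N)$ while the number of terms is $O(N)$, which gives a $1/N$ bound after squaring. This is exactly the twist-energy mechanism of \cite{Tasaki2018}.

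With the gentle profile fixed, apply the group commutator identity. Since $\beta_{\mrx}(Q_N)=-Q_N$ and $u_{\mrx}$ implements $\beta_{\mrx}^{\rm R}$, we get $u_{\mrx}U_N(\pi)u_{\mrx}^*=U_N(-\pi)$. Therefore
\[
u_{\mrz}u_{\mrx}u_{\mrz}^*u_{\mrx}^*=\slim\, U_N(\pi)U_N(\pi)^*{}^{\,\text{twisted}}=\slim\,\pi(e^{-2\pi iQ_N}).
\]
Here we use that $u_{\mrx}u_{\mrz}^*u_{\mrx}^*=\slim U_N(-\pi)^*=\slim U_N(\pi)$, and that products of strongly convergent uniformly bounded sequences converge strongly. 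By \rlb{integer}, $e^{-2\pi i Q_N}$ equals the twist exponential of the profile $f_N$, in which the constant part $f=1$ drops out. So the limit is a twist with ramp on $[N,2N]$. By \rlb{IndO} the left side is $\IndO(\omega)\one$.

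Finally, relate this ramp twist to $T_N$ in \rlb{one-sided-twist}, whose ramp is on $[0,N]$ rising from $0$ to $1$. Using $e^{-2\pi i\Sz_j}=\one$, $T_N^*$ (ramp descending) corresponds to a profile $1-j/N$ with the same exponential. I would therefore choose $f_N(j)=1-j/N$ on $[0,N]$ from the start, with $Q_N=\sum_{j=0}^N(1-j/N)\Sz_j$, so that $e^{-2\pi iQ_N}=T_N^{*}$ exactly. The fluctuation argument above works verbatim for this profile, provided we compare $Q_N$ with $Q_{2N}$ rather than with a sharp cutoff. This gives $\slim\pi(T_N)^*=\IndO(\omega)\one$, and hence $\slim\pi(T_N)=\IndO(\omega)\one$ because the index is real. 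Taking the expectation in $\Omega$ gives $\lim\omega(T_N)=\IndO(\omega)$, which shows both that the limit \rlb{IndT} exists and that $\IndT=\IndO$. The explicit error bound comes from $\norm{(\pi(T_N)-\IndO\one)\Omega}\le\sum_k\pi\,\omega((Q_{2^{k+1}N}-Q_{2^kN})^2)^{1/2}=O((\gamma N)^{-1/2})$, summed geometrically.
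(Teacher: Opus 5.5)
Your proposal is correct and, with the profile $f_N(j)=1-j/N$ you settle on, it is essentially the paper's proof: a gap-plus-zero-mean bound on $\omega((Q_M-Q_N)^2)$, strong convergence first on $\Omega$ and then on the dense set $\pi(\Aloc)\Omega$, and the exact identity $\pi(W_N(\pi))u_{\mrx}\pi(W_N(\pi))^*u_{\mrx}^*=\pi(e^{-2\pi iQ_N})=\pi(T_N^*)$. The one step you leave implicit is where the $1/N$ rate comes from (a direct bound on $\omega(A[H,A])$ only gives $O(1)$). The paper uses stationarity to write $\omega(A[H,A])=\frac12\omega([A,[H,A]])$ and replaces each $[A,[h_j,A]]$ by $[B_j,[h_j,B_j]]$, with $\snorm{B_j}$ controlled by the local slopes; this is your ``squared slope times $O(N)$ terms'' count. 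Since $\calD(f_M-f_N)=1/N-1/M$, you can compare $Q_M$ and $Q_N$ directly, so no dyadic summation is needed.
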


Henceforth we write $\Ind(\omega)$ for the common value
$\IndT(\omega)=\IndO(\omega)$.
We now review some properties of $\Ind(\omega)$.
The following quantitative estimate is a version of the local-twist
bounds in \cite{Tasaki2018,Tasaki2023}; our proof yields it in the
explicit form
\eq
 0\le1-\Ind(\omega)\,\omega(T_N)
                  \le\frac{2\pi^2\kappa}{\gamma N},
 \lb{twist-error}
\en
for every $N\ge1$, with the nonoptimal constant
\eq
\kappa=4JS^2r^2(2r+1)^2.
\lb{kappa}
\en
To express the expectation as that of a self-adjoint observable, put
$A_N=(T_N+T_N^*)/2$.
This is the same passage to the Hermitian part used in
\cite[Appendix~K, Theorem~8]{Huang2022}; it introduces no new index.
Since $\omega(T_N)\in\bbR$, we have $\omega(A_N)=\omega(T_N)$, so
\eq
 \Ind(\omega)\,\omega(A_N)\ge\frac12
 \quad\text{if}\quad N\ge\frac{4\pi^2\kappa}{\gamma}.
 \lb{classification-margin}
\en
An estimate of the expectation with error less than $1/2$ therefore
determines $\Ind(\omega)\in\{-1,1\}$.
The required twist length depends only on $S,r,J$, and a known gap
lower bound. This is the local-observable sense of computability:
the expectation is taken in the infinite-volume state $\omega$.
Replacing it by a finite-volume expectation requires separate
control of that approximation.

For example, the trivial Hamiltonian $H_{\rm tr}=\sum_j(\Sz_j)^2$ has a unique product ground
state $\omega_{\rm tr}$ with gap one and $\Ind(\omega_{\rm tr})=1$.
The spin-one AKLT model
\eq
 H_{\rm AKLT}=\sum_j\Bigl\{
 \bsS_j\cdot\bsS_{j+1}
 +\frac13(\bsS_j\cdot\bsS_{j+1})^2\Bigr\}
\en
has a unique gapped infinite-volume ground state $\omega_{\rm AKLT}$
\cite{AKLT1988} with $\Ind(\omega_{\rm AKLT})=-1$ \cite{Tasaki2018}.
The same sign describes the anticommuting spin-$\frac12$ boundary
rotations \cite{Pollmann2010,Ogata2020,BachmannNachtergaele2014}.
Anisotropic and spatially nonuniform interactions are also covered
whenever the symmetry and gap hypotheses hold.

Consider now the antiferromagnetic Heisenberg chain and its finite
open-chain version with boundary fields:
\eqg
 H_{\rm Heis}=\sum_{j\in\bbZ}\bsS_j\cdot\bsS_{j+1},\qquad
 H_{L,h}=\sum_{j=-L}^{L-1}\bsS_j\cdot\bsS_{j+1}
                    -h(\Sz_{-L}+\Sz_L).
 \lb{Heis}
\eng
Let $E_0(L,h)$ and $E_1(L,h)$ be the lowest and second-lowest energies
of $H_{L,h}$, counted with multiplicity.

\begin{assumption}\label{A:Heis}
The infinite chain $H_{\rm Heis}$ has a unique ground state
$\omega_{\rm Heis}$.
For some $h>0$, $\gamma_0>0$, and $L_0\in\bbN$,
\eq
 E_1(L,h)-E_0(L,h)\ge\gamma_0,\qquad L\ge L_0.
 \lb{Heis-gap}
\en
\end{assumption}
The finite-volume condition \rlb{Heis-gap} is the boundary-field gap
hypothesis of \cite{Tasaki2025}; it is not replaced here by a bulk-gap
assumption alone.

\begin{corollary}\label{C:Heis}
Under Assumption~\ref{A:Heis},
\eq
 \Ind(\omega_{\rm Heis})=(-1)^S.
 \lb{Heis-index}
\en
In particular, the Ogata index is $-1$ for $S=1$ and,
more generally, for odd positive integer $S$.
\end{corollary}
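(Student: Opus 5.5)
The corollary should follow by combining Theorem~\ref{T:main} with the twist-index computation of \cite{Tasaki2025}. The plan has three steps. First, verify that $\omega_{\rm Heis}$ satisfies Assumption~\ref{A:state} with some gap $\gamma>0$. Second, invoke \cite{Tasaki2025} for $\IndT(\omega_{\rm Heis})=(-1)^S$. Third, conclude from \rlb{equivalence}.

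\emph{Interaction.} $H_{\rm Heis}$ has the form \rlb{H} with $r=1$, $h_j=\bsS_j\cdot\bsS_{j+1}$ and $J=S^2$ (a crude bound). Each $h_j$ commutes with $\Sz_j+\Sz_{j+1}$ and is invariant under $\beta_{\mrx}$, so \rlb{local-symmetry} holds.

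\emph{Gap.} Since $E_1(L,h)>E_0(L,h)$, the finite-volume ground state $\Phi_L$ of $H_{L,h}$ is nondegenerate for $L\ge L_0$. Take $A$ supported in $[-L+2,L-2]$ and write $\langle\cdot\rangle_L$ for expectations in $\Phi_L$. Then $[H_{L,h},A]=[H_{\rm Heis},A]$, because neither the boundary fields nor the missing bonds touch the support of $A$. Since $(H_{L,h}-E_0)\Phi_L=0$, we get
\[
\langle A^*[H_{\rm Heis},A]\rangle_L=\langle A\Phi_L,(H_{L,h}-E_0)A\Phi_L\rangle\ge\gamma_0\,\snorm{(A-\langle A\rangle_L)\Phi_L}^2 .
\]
Here we used that $(A-\langle A\rangle_L)\Phi_L\perp\Phi_L$. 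The right-hand side equals $\gamma_0\{\langle A^*A\rangle_L-|\langle A\rangle_L|^2\}$.

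By weak-$*$ compactness, a subsequence of the states $\langle\cdot\rangle_L$ converges to a state $\omega'$ on $\calA$. Letting $L\uparrow\infty$, $\omega'$ satisfies both the ground-state inequality and \rlb{gap} with $\gamma=\gamma_0$ for every $A\in\Aloc$. By the uniqueness part of Assumption~\ref{A:Heis}, $\omega'=\omega_{\rm Heis}$, so $\omega_{\rm Heis}$ is locally-unique gapped.

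\emph{Symmetry.} The global rotations $\beta_{\mrz,\theta}$ and $\beta_{\mrx}$ map ground states of the symmetric $H_{\rm Heis}$ to ground states. Uniqueness therefore makes $\omega_{\rm Heis}$ invariant under them, even though $H_{L,h}$ itself breaks $\beta_{\mrx}$.

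\emph{Conclusion.} Under exactly the hypotheses of Assumption~\ref{A:Heis}, \cite{Tasaki2025} shows that the twist index of $\omega_{\rm Heis}$ equals $(-1)^S$. That paper uses a twist profile centered at the origin. By the Remark following \rlb{IndT}, the profile-shifting deformation shows this equals $\IndT(\omega_{\rm Heis})$ as defined in \rlb{IndT}; only the gap and spin-flip symmetry established above are needed. Theorem~\ref{T:main} then gives $\IndO(\omega_{\rm Heis})=\IndT(\omega_{\rm Heis})=(-1)^S$, which is $-1$ for odd $S$, in particular $S=1$.

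The main point requiring care is the second step. One must check that the statement in \cite{Tasaki2025} is formulated for the infinite-volume state $\omega_{\rm Heis}$, not merely for finite-volume twist expectations. If it is formulated for $H_{L,h}$ only, I would transfer it as follows. Apply the same estimate along the convergent subsequence to the local observable $T_N$, whose support lies inside $[-L+2,L-2]$ for large $L$. Then use \rlb{twist-error} to fix the sign of $\omega_{\rm Heis}(T_N)$ for $N\ge 4\pi^2\kappa/\gamma_0$.
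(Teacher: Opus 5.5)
Your proposal is correct and follows the paper's proof. The paper likewise takes the locally-unique gapped limiting state from \cite{Tasaki2025}, uses bulk uniqueness to identify it with $\omega_{\rm Heis}$ and obtain its $\Uone$ and spin-flip invariance, and applies Theorem~\ref{T:main} with the twist-profile Remark; you additionally spell out the weak-$*$ limit that carries the boundary-field gap $\gamma_0$ over to \rlb{gap}. One harmless slip: $J=S^2$ is not a valid bound, since $\snorm{\bsS_j\cdot\bsS_{j+1}}=S(S+1)$ (attained on the two-site singlet); take $J=S(S+1)$, which only changes $\kappa$.
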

\noindent{\em Proof:}\/ Under the boundary-field gap hypothesis,
\cite{Tasaki2025} gives a locally-unique gapped limiting state with
twist index $(-1)^S$.
Bulk uniqueness identifies it with $\omega_{\rm Heis}$ and implies
spin-rotation symmetry.
Apply Theorem~\ref{T:main}, using the agreement of twist conventions
noted in the Remark at the end of Section~\ref{s:indices}.\hfill\qedm

\section{Proof of Theorem~\ref{T:main}}\label{S:proof}
In this proof, $\Ind(\omega)$ denotes the sign defined by the
operator-algebraic construction \rlb{IndO}; we identify it with the
twist limit \rlb{IndT} at the end.

\subsection{A charge-fluctuation estimate}\label{S:fluctuation}
For a finitely supported real sequence $f=(f_j)_{j\in\bbZ}$, let
\eq
             Q(f)=\sum_jf_j\Sz_j,\qquad
             \calD(f)=\sum_j(f_{j+1}-f_j)^2.
 \lb{Q-D}
\en
The following estimate is the main analytic input.
It expresses the fact that, in a gapped state with a conserved spin
component, slowly varying weighted spin sums have small fluctuations.

\begin{lemma}\label{L:fluctuation}
Under the assumptions of Theorem~\ref{T:main},
\eq
                         \omega(Q(f)^2)
                              \le\frac{\kappa}{\gamma}\,\calD(f)
 \lb{fluctuation}
\en
for every finitely supported real $f$.
\end{lemma}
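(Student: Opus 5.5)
The plan is to apply the gap inequality \rlb{gap} with the self-adjoint local observable $A=Q(f)$. Then we bound the resulting energy term $\omega(Q[H,Q])$ by a double commutator whose norm is controlled by the increments of $f$, using the local charge conservation in \rlb{local-symmetry}.

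\emph{Step 1 (gap inequality).} Write $Q=Q(f)$. Since $f$ is finitely supported, $Q\in\Aloc$ and $Q=Q^*$. By \rlb{zero-mean}, $\omega(Q)=0$, so \rlb{gap} gives
\[
\gamma\,\omega(Q^2)\le\omega(Q[H,Q]).
\]

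\emph{Step 2 (symmetrization).} Stationarity applied to $Q^2$ gives $0=\omega([H,Q^2])=\omega([H,Q]Q)+\omega(Q[H,Q])$. Hence
\[
\omega(Q[H,Q])=\tfrac12\,\omega\bigl([Q,[H,Q]]\bigr)\le\tfrac12\,\bigl\Vert[Q,[H,Q]]\bigr\Vert.
\]
Only finitely many $h_j$ contribute, so $[Q,[H,Q]]=\sum_j[Q,[h_j,Q]]$ is a finite sum.

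\emph{Step 3 (local conservation).} For each $j$, put $Q_j=\sum_{k=j-r}^{j+r}(f_k-f_j)\Sz_k$. By \rlb{local-symmetry}, $h_j$ commutes with $f_j\sum_{k=j-r}^{j+r}\Sz_k$. It also commutes with the $\Sz_k$ outside its support. Therefore $[h_j,Q]=[h_j,Q_j]$.

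Next, the operator $[h_j,Q_j]$ is supported in $[j-r,j+r]$, and the part of $Q$ there differs from $Q_j$ by a multiple of $\sum_{k=j-r}^{j+r}\Sz_k$. The $\Sz_k$ at sites outside $[j-r,j+r]$ commute with $[h_j,Q_j]$. The conserved sum commutes with $[h_j,Q_j]$ by the Jacobi identity, since it commutes with both $h_j$ and $Q_j$. Hence $[Q,[h_j,Q]]=[Q_j,[h_j,Q_j]]$, and
\[
\bigl\Vert[Q_j,[h_j,Q_j]]\bigr\Vert\le4J\snorm{Q_j}^2\le4JS^2\Bigl(\sum_{k=j-r}^{j+r}|f_k-f_j|\Bigr)^2 .
\]

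\emph{Step 4 (increments).} Write $d_l=f_{l+1}-f_l$. Each $|f_k-f_j|$ is bounded by a sum of at most $r$ of the $|d_l|$ with $l$ in a window of $2r$ increments around $j$. Each such increment is counted at most $r$ times in the sum over $k$. Cauchy--Schwarz then gives
\[
\Bigl(\sum_k|f_k-f_j|\Bigr)^2\le r^2\Bigl(\sum_{l\in W_j}|d_l|\Bigr)^2\le 2r^3\sum_{l\in W_j}d_l^2 .
\]
Summing over $j$, each $l$ lies in at most $2r$ windows $W_j$. This yields
\[
\bigl\Vert[Q,[H,Q]]\bigr\Vert\le 4JS^2\cdot4r^4\,\calD(f)\le 2\kappa\,\calD(f),
\]
which combined with Steps 1--2 gives \rlb{fluctuation}, with room to spare relative to $(2r+1)^2$.

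The main obstacle is Step 3, namely checking carefully that the conservation law $[h_j,\sum_{k=j-r}^{j+r}\Sz_k]=0$ lets us replace $Q$ by the \emph{differences} $f_k-f_j$ in \emph{both} commutators. Without this, the bound would scale with $\sum f_j^2$ instead of $\calD(f)$. The combinatorial counting in Step 4 is routine, and any loss there is absorbed in the nonoptimal $\kappa$.
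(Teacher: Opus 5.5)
Your proof is correct and follows the paper's argument essentially step for step: the zero mean plus the gap inequality, symmetrization via stationarity applied to $Q(f)^2$, and the replacement $[Q(f),[h_j,Q(f)]]=[B_j(f),[h_j,B_j(f)]]$ from local conservation and the Jacobi identity. The only difference is that in Step~4 you collect increments before applying Cauchy--Schwarz, which gives the slightly smaller constant $8JS^2r^4$ in place of $\kappa=4JS^2r^2(2r+1)^2$; this is still within the claimed bound.
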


\noindent{\em Proof:}\/ By \rlb{zero-mean}, $\omega(Q(f))=0$.
The gap inequality then gives
\eqa
 \gamma\,\omega\bigl(Q(f)^2\bigr)
 \le\omega\bigl(Q(f)[H,Q(f)]\bigr)=\frac12\omega\bigl([Q(f),[H,Q(f)]]\bigr),
 \lb{double-commutator}
\ena
where we used the stationarity $\omega([H,Q(f)^2])=0$ to get the final equality.
All expressions in \rlb{double-commutator} are finite sums.

For each local interval $[j-r,j+r]$, define
\eq
                  B_j(f)=\sum_{k=j-r}^{j+r}(f_k-f_j)\Sz_k.
 \lb{Bj}
\en
To justify the replacement in both commutators in \rlb{replace}, put
$C_j=Q(f)-B_j(f)$.
On $[j-r,j+r]$, this operator is $f_j\sum_{k=j-r}^{j+r}\Sz_k$;
its remaining terms are supported outside the interval.
Thus $[C_j,h_j]=0$ by \rlb{local-symmetry} and locality.
Also, $[C_j,B_j(f)]=0$ since all $\Sz_k$ commute.
It follows that $[h_j,Q(f)]=[h_j,B_j(f)]$ and
$[C_j,[h_j,B_j(f)]]=0$, so
\eq
             [Q(f),[h_j,Q(f)]]=[B_j(f),[h_j,B_j(f)]].
 \lb{replace}
\en
To estimate $B_j(f)$, note that $f_k-f_j$ is a sum of $|k-j|$
nearest-neighbor differences, up to an overall sign. Thus, for
$|k-j|\le r$, Cauchy--Schwarz gives
\eq
 |f_k-f_j|^2\le |k-j|
 \sum_{m=\min\{j,k\}}^{\max\{j,k\}-1}(f_{m+1}-f_m)^2
 \le r\sum_{m=j-r}^{j+r-1}(f_{m+1}-f_m)^2.
\en
Note that the final bound is independent of $k$.
The triangle inequality then implies
\eq
 \snorm{B_j(f)}\le S\sum_{k=j-r}^{j+r}|f_k-f_j|
 \le S(2r+1)\biggl\{r\sum_{m=j-r}^{j+r-1}
                         (f_{m+1}-f_m)^2\biggr\}^{1/2},
\en
and hence
\eq
 \snorm{B_j(f)}^2\le
 S^2(2r+1)^2\,r\sum_{m=j-r}^{j+r-1}(f_{m+1}-f_m)^2.
 \lb{Bj-bound}
\en
Since $\snorm{[B,[h,B]]}\le4\snorm{h}\snorm{B}^2$ and each difference
occurs for $2r$ values of $j$, summing \rlb{Bj-bound} yields
\eq
 \frac12\sum_j\snorm{[B_j(f),[h_j,B_j(f)]]}
       \le2J\sum_j\snorm{B_j(f)}^2\le\kappa\,\calD(f),
 \lb{summed-bound}
\en
with $\kappa$ defined in \rlb{kappa}.
Together with \rlb{double-commutator}, this proves the lemma.\hfill\qedm

\medskip
Without \rlb{zero-mean}, the same proof bounds the variance
$\omega(Q(f)^2)-\omega(Q(f))^2$.
Here the spin-flip symmetry makes the variance equal to the uncentered
second moment. This point will fix the phases in the limiting rotation
operators, not only the states obtained by applying them.

\subsection{The strong-limit construction}\label{S:half}
For $N\in\bbN$, define
\eq
 f_N(j)=
 \begin{cases}
  1-j/N,&0\le j\le N,\\
  0,&j<0\text{ or }j>N,
 \end{cases}
 \qquad Q_N=Q(f_N),\qquad W_N(\theta)=e^{-i\theta Q_N}.
 \lb{cutoff}
\en
The unitary $W_N(\theta)$ is a local operator supported in the right
half-chain. Its rotation angle varies linearly from $\theta$ at the
cut to zero at site $N$. At each fixed site in the right half-chain,
the angle tends to $\theta$ as $N\uparrow\infty$.
For each fixed local observable $A$,
\eq
             \Ad(W_N(\theta))(A)\longrightarrow
                         \beta_{\mrz,\theta}^{\rm R}(A)
                         \qquad(N\uparrow\infty)
 \lb{local-limit}
\en
in norm, uniformly for $\theta$ in compact intervals.
Convergence of these automorphisms alone would not control the phases
of the implementing unitaries. We will prove convergence of the
unitaries themselves.

Each $f_N$ has a jump of size one at the bond $(-1,0)$.
This jump cancels exactly in $f_M-f_N$, and one has
\eq
                  \calD(f_M-f_N)=\left|\frac1M-\frac1N\right|.
 \lb{difference-D}
\en
To see this, assume $M\ge N$ and note that the increments of
$f_M-f_N$ are $1/N-1/M$ on the $N$ bonds from $0$ to $N$, and
$-1/M$ on the $M-N$ bonds from $N$ to $M$; all other increments
vanish.
Their squared sum is
$N(1/N-1/M)^2+(M-N)/M^2=1/N-1/M$.
This cancellation at the fixed cut is essential.

\begin{proposition}\label{P:half-rotation}
The strong limits
\eq
                U_{\rm R}(\theta)=\slim_{N\uparrow\infty}\pi(W_N(\theta)),
                \qquad \theta\in\bbR,
 \lb{strong-limit}
\en
exist and form a strongly continuous unitary representation of
$\bbR$ on $\calH$.
They implement the right-half-chain rotations:
\eq
 U_{\rm R}(\theta)\pi(A)U_{\rm R}(\theta)^*
          =\pi(\beta_{\mrz,\theta}^{\rm R}(A)),\qquad A\in\calA.
 \lb{half-action}
\en
Moreover,
\eq
 \norm{\{\pi(W_N(\theta))-U_{\rm R}(\theta)\}\Omega}
                  \le |\theta|\sqrt{\frac{\kappa}{\gamma N}}.
 \lb{rotation-rate}
\en
\end{proposition}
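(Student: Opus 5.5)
The plan is to first get convergence on the GNS vector $\Omega$ from Lemma~\ref{L:fluctuation}, then extend it to the dense set $\pi(\Aloc)\Omega$ using \rlb{local-limit}, and finally pass the group, continuity and implementation properties to the limit.

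\emph{Step 1: the Cauchy estimate on $\Omega$.} All $\Sz_j$ commute, so for $M\ge N$ we have $W_M(\theta)^*W_N(\theta)=e^{i\theta Q(f_M-f_N)}$. Hence
\[
\norm{\pi(W_N(\theta))\Omega-\pi(W_M(\theta))\Omega}
=\norm{\pi\bigl(e^{i\theta Q(f_M-f_N)}-\one\bigr)\Omega}.
\]
The elementary bound $|e^{ix}-1|\le|x|$ and the spectral theorem for the self-adjoint $\pi(Q(f_M-f_N))$ bound this by $|\theta|\,\omega(Q(f_M-f_N)^2)^{1/2}$. By Lemma~\ref{L:fluctuation} and \rlb{difference-D}, this is at most $|\theta|\sqrt{\kappa(1/N-1/M)/\gamma}$. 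So $\pi(W_N(\theta))\Omega$ is Cauchy. Letting $M\uparrow\infty$ will give \rlb{rotation-rate}, once we know the limit is $U_{\rm R}(\theta)\Omega$.

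\emph{Step 2: extension to a dense set.} For local $A$ write
\[
\pi(W_N(\theta))\pi(A)\Omega=\pi\bigl(\Ad(W_N(\theta))(A)\bigr)\,\pi(W_N(\theta))\Omega .
\]
The first factor converges in norm to $\pi(\beta^{\rm R}_{\mrz,\theta}(A))$ by \rlb{local-limit}, and the second converges by Step 1. So $\pi(W_N(\theta))\xi$ converges for every $\xi$ in the dense set $\pi(\Aloc)\Omega$. The operators are uniformly bounded by $1$, so an $\epsilon/3$ argument gives strong convergence on all of $\calH$. The limit $U_{\rm R}(\theta)$ is an isometry.

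\emph{Step 3: properties of the limit.} Since $W_N(\theta)W_N(\theta')=W_N(\theta+\theta')$ and strong limits of products of uniformly bounded operators are products of the limits, we get $U_{\rm R}(\theta)U_{\rm R}(\theta')=U_{\rm R}(\theta+\theta')$ and $U_{\rm R}(0)=\one$. Thus $U_{\rm R}(-\theta)$ is a two-sided inverse, and each $U_{\rm R}(\theta)$ is unitary.

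Identity \rlb{half-action} follows on local $A$ by multiplying the norm-convergent $\pi(\Ad(W_N)(A))$ by the strongly convergent $\pi(W_N)$:
\[
U_{\rm R}(\theta)\pi(A)=\pi(\beta^{\rm R}_{\mrz,\theta}(A))U_{\rm R}(\theta).
\]
Norm density then extends it to all $A\in\calA$.

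For strong continuity, take $\theta$ in a compact interval. The Step 1 bound is uniform there, and so is the error for fixed local $A$ by the uniformity in \rlb{local-limit}. Hence $U_{\rm R}(\theta)\pi(A)\Omega$ is a locally uniform limit of maps $\theta\mapsto\pi(W_N(\theta))\pi(A)\Omega$, which are norm-continuous because $W_N(\theta)$ is a finite-dimensional exponential. The limit is therefore continuous on a dense set, and uniform boundedness gives strong continuity everywhere.

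\emph{Main obstacle.} The substantive point is Step 1: the cancellation \rlb{difference-D} at the cut combined with the uncentered bound \rlb{fluctuation}. Using the uncentered bound, rather than only a variance bound, is what fixes the phases. Without it one would only get convergence up to phases $e^{-i\theta\omega(Q_N)}$, and the limit might not exist. The remaining steps are routine density arguments.
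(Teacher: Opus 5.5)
Your proposal is correct and follows essentially the same route as the paper. The uncentered fluctuation bound combined with the exact cancellation \rlb{difference-D} gives the Cauchy estimate on $\Omega$. The identity $\pi(W_N(\theta))\pi(A)\Omega=\pi(\Ad(W_N(\theta))(A))\pi(W_N(\theta))\Omega$ then extends convergence to a dense set, and the group law, \rlb{half-action}, strong continuity, and \rlb{rotation-rate} follow by passing to limits. The only cosmetic difference is that you get unitarity from the group law and $U_{\rm R}(0)=\one$, whereas the paper gets it from strong convergence of the adjoints $W_N(\theta)^*=W_N(-\theta)$; these amount to the same fact.
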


\noindent{\em Proof:}\/ The operators $Q_M$ and $Q_N$ commute.
Using $2-2\cos t\le t^2$, Lemma~\ref{L:fluctuation}, and
\rlb{difference-D}, we find
\eqa
 \norm{\pi(W_M(\theta)-W_N(\theta))\Omega}^2
 &=\omega\bigl(2\one-2\cos(\theta(Q_M-Q_N))\bigr)\nl
 &\le\theta^2\omega((Q_M-Q_N)^2)\nl
 &\le\frac{\kappa\theta^2}{\gamma}
                       \left|\frac1M-\frac1N\right|.
 \lb{Cauchy}
\ena
Thus $\pi(W_N(\theta))\Omega$ is a Cauchy sequence, uniformly for
$\theta$ in compact intervals.

For $A\in\Aloc$, use the identity
\eq
 \pi(W_N(\theta))\pi(A)\Omega
   =\pi(\Ad(W_N(\theta))(A))\pi(W_N(\theta))\Omega.
 \lb{dense-set}
\en
By \rlb{local-limit} and \rlb{Cauchy}, the right-hand side converges,
again locally uniformly in $\theta$.
Since $\pi(\Aloc)\Omega$ is dense in $\calH$ and the unitaries are
uniformly bounded, this proves strong convergence on $\calH$.
Applying the same argument to $-\theta$ proves strong convergence of
the adjoints.
The limits are therefore unitary, and passing to the limit in
$W_N(\theta)W_N(\varphi)=W_N(\theta+\varphi)$ gives the group law.
The locally uniform convergence proves strong continuity.
Passing to the limit in the conjugation identity proves
\rlb{half-action}, first for local $A$ and then for all $A\in\calA$.
Finally, let $M\uparrow\infty$ in \rlb{Cauchy} to obtain
\rlb{rotation-rate}.\hfill\qedm

\medskip
It is important that this is a representation of $\bbR$.
Although the implemented automorphisms are $2\pi$-periodic, the
unitaries themselves need not be. Their value at $2\pi$ will contain
the topological information.
No convergence in operator norm is asserted.

\subsection{Identification with a one-sided twist}
Let $u_{\mrx}$ implement $\beta_{\mrx}^{\rm R}$ as in
\rlb{implementer}.
Since $W_N(\theta)\in\AR$, we have the exact identity
\eq
 u_{\mrx}\pi(W_N(\theta))u_{\mrx}^*=\pi(W_N(-\theta)).
 \lb{flip-local}
\en
Passing to strong limits gives
\eq
 u_{\mrx}U_{\rm R}(\theta)u_{\mrx}^*=U_{\rm R}(-\theta).
 \lb{flip-limit}
\en
Moreover, on-site integrality implies
\eq
 W_N(2\pi)=\prod_{j=0}^{N}e^{-2\pi i(1-j/N)\Sz_j}
          =\prod_{j=0}^{N}e^{2\pi i(j/N)\Sz_j}=T_N^*.
 \lb{exact-twist}
\en
Combining \rlb{flip-local} and \rlb{exact-twist}, we obtain
\eq
 \pi(W_N(\pi))u_{\mrx}\pi(W_N(\pi))^*u_{\mrx}^*=\pi(T_N^*).
 \lb{finite-commutator}
\en
This is the identity that connects the two constructions.
By Proposition~\ref{P:half-rotation}, $u_{\mrz}:=U_{\rm R}(\pi)$
implements the required half-chain $\pi$ rotation.
The left-hand side of \rlb{finite-commutator} converges strongly to
the group commutator \rlb{IndO} defining $\Ind(\omega)$.
The right-hand side converges strongly to $U_{\rm R}(2\pi)$ by
\rlb{exact-twist} and \rlb{strong-limit}, and hence
\eq
 U_{\rm R}(2\pi)=\Ind(\omega)\one.
 \lb{two-pi}
\en
The adjoints also converge strongly by Proposition~\ref{P:half-rotation},
which proves the strong limit in \rlb{equivalence}.
Taking its expectation in $\Omega$ establishes the limit in \rlb{IndT}
and identifies the twist index with $\Ind(\omega)$.
Finally, \rlb{rotation-rate} at $\theta=2\pi$ and the reality of
$\omega(T_N)$ give
\eq
\frac{4\pi^2\kappa}{\gamma N}\ge
\norm{\{\pi(T_N^*)-\Ind(\omega)\one\}\Omega}^2
=
 2\{1-\Ind(\omega)\,\omega(T_N)\},
\en
which proves \rlb{twist-error} and completes the proof of
Theorem~\ref{T:main}.\hfill\qedm

\section{Discussion}\label{S:discussion}
We have shown that the twist and Ogata indices detect the same projective
obstruction under the assumptions of Theorem~\ref{T:main}.
The identity \rlb{finite-commutator} and the strong-limit estimate
\rlb{rotation-rate} give a direct comparison, and the quantitative
bound \rlb{twist-error} makes the abstract index accessible through
local expectations.
The Heisenberg corollary illustrates how an existing twist-index
calculation can thereby determine the Ogata index.

The comparison can be extended to half-odd-integer spins and, more
generally, to uniformly bounded site-dependent spin quantum numbers
$S_j$. In this setting the common index depends on the cut:
if $\Ind_a(\omega)$ denotes the index at the cut $(a-1,a)$, then
$\Ind_{a+1}(\omega)=(-1)^{2S_a}\Ind_a(\omega)$.
For the twist construction, the integer-spectrum generators
$\Sz_j-S_j\one$ ensure continuity under changes of the twist profile,
since $e^{-2\pi i(\Sz_j-S_j\one)}=\one$.
The scalar phases must be normalized consistently with the chosen
cut. With this convention the twist index again agrees with the Ogata
index. The role of the choice of a unit cell in the related
fermionic twist index is discussed in \cite{Tasaki2023}.
Details of the spin-chain extension will be given in the technical
note \cite{TechnicalNote}.

A further direction is the corresponding comparison for
$[\Uone]^{n-1}\rtimes\bbZ_n$ symmetry, where the cyclic group acts by
permuting the diagonal charges. This is the diagonal-torus and cyclic
permutation subgroup of $\mathrm{PSU}(n)$, and reduces to $\Gsym$
when $n=2$.
Twist expectations with $n$th-root quantization in $\mathrm{SU}(n)$
spin models have already been studied by Su, Yao, and Furusaki
\cite{SuYaoFurusaki}.
The extension of the present comparison to this setting will be
discussed in the same technical note \cite{TechnicalNote}.

\section*{History of the argument and use of AI}
\addcontentsline{toc}{section}{History of the argument and use of AI}
I had been trying for some time to prove the equivalence of the two
indices, initially using twists centered at the origin, as in
\cite{Tasaki2025}.
I discussed their relation to the half-chain projective
representation with Yoshiko Ogata, but these discussions did not
lead to a proof.
She subsequently proved and communicated to me a strong-convergence
result for the corresponding regularized half-chain unitaries.
This provided an important ingredient, but I still did not see how
to identify the two indices.

About a year ago, I realized that the right approach was to use
twists whose angle starts to increase at the origin, as in
\rlb{one-sided-twist}.
In light of our preceding discussions and Ogata's convergence
result, I expected this one-sided construction to yield the desired
comparison.
Before returning to Ogata with the revised approach, I asked ChatGPT
(GPT-6 Astra Pro, OpenAI) to develop it.
I provided my papers and slides presenting the idea, together with
Ogata's account of her index.
ChatGPT developed the proof presented here, including the
cutoff-difference estimate, the strong-limit argument with controlled
phases, and the exact commutator identity.
Its contribution was substantive and mathematical, not limited to
assistance with wording.

I examined the resulting proof before asking ChatGPT to prepare a
manuscript draft.
It also assisted with subsequent substantial revisions and the literature search.
I take full responsibility for the mathematical arguments,
references, and final text.

\section*{Acknowledgments}
\addcontentsline{toc}{section}{Acknowledgments}
I am grateful to Yoshiko Ogata, whom I regard as an intellectual
coauthor of this work, for extensive discussions and for proving and
communicating the strong-convergence result described above.
This work is supported by JSPS Grants-in-Aid for Scientific Research
No.\ 25K07171.


\begin{thebibliography}{99}
\addcontentsline{toc}{section}{References}
\setlength{\itemsep}{0pt}

\bibitem{Haldane1981}
F.D.M. Haldane,
{\em Ground State Properties of Antiferromagnetic Chains with Unrestricted
Spin: Integer Spin Chains as Realisations of the $O(3)$ Non-Linear Sigma
Model}\/,
ILL preprint SP-81/95 (1981).
\\\url{https://arxiv.org/abs/1612.00076}

\bibitem{Haldane1983a}
F.D.M. Haldane, 
{\em Continuum dynamics of the 1-D Heisenberg antiferromagnet: identification with the $O(3)$ nonlinear sigma model}\/,
Phys. Lett. {\bf 93A}, 464--468 (1983).
\\\url{http://www.sciencedirect.com/science/article/pii/037596018390631X}

\bibitem{Haldane1983b}
F.D.M. Haldane, 
{\em Nonlinear field theory of large-spin Heisenberg antiferromagnets: semiclassically quantized solitons of the one-dimensional easy-axis N\'eel state}\/,
Phys. Rev. Lett. {\bf 50}, 1153--1156 (1983).
\\\url{https://journals.aps.org/prl/abstract/10.1103/PhysRevLett.50.1153}

\bibitem{AKLT1988}
I. Affleck, T. Kennedy, E.H. Lieb, and H. Tasaki,
{\em Valence bond ground states in isotropic quantum antiferromagnets}\/,
Commun. Math. Phys. {\bf 115}, 477--528 (1988).
\\\url{https://projecteuclid.org/euclid.cmp/1104161001}

\bibitem{FNW1989}
M. Fannes, B. Nachtergaele, and R. F. Werner,
{\em Exact Antiferromagnetic Ground States of Quantum Spin Chains}\/,
Europhys. Lett. {\bf 10}, 633--637 (1989).

\bibitem{FNW1992}
M. Fannes, B. Nachtergaele, and R.F. Werner,
{\em Finitely correlated states on quantum spin chains}\/,
Commun. Math. Phys. {\bf 144}, 443--490 (1992).
\\\url{https://projecteuclid.org/euclid.cmp/1104249404}

\bibitem{denNijsRommelse}
M. den Nijs and K. Rommelse,
{\em Preroughening transitions in crystal surfaces and valence-bond phases
in quantum spin chains}\/,
Phys. Rev. B {\bf 40}, 4709--4734 (1989).

\bibitem{KennedyTasaki1992}
T. Kennedy and H. Tasaki,
{\em Hidden symmetry breaking and the Haldane phase in $S=1$ quantum spin
chains}\/,
Commun. Math. Phys. {\bf 147}, 431--484 (1992).
\\\url{https://projecteuclid.org/euclid.cmp/1104250747}

\bibitem{Oshikawa1992}
M. Oshikawa,
{\em Hidden $\bbZ_2\times\bbZ_2$ symmetry in quantum spin chains with
arbitrary integer spin}\/,
J. Phys.: Condens. Matter {\bf 4}, 7469--7488 (1992).

\bibitem{GuWen2009}
Z.-C. Gu and X.-G. Wen,
{\em Tensor-entanglement-filtering renormalization approach and
symmetry-protected topological order}\/,
Phys. Rev. B {\bf 80}, 155131 (2009).
\\\url{https://arxiv.org/abs/0903.1069}

\bibitem{Pollmann2010}
F. Pollmann, A.M. Turner, E. Berg, and M. Oshikawa,
{\em Entanglement spectrum of a topological phase in one dimension}\/,
Phys. Rev. B {\bf 81}, 064439 (2010).
\\\url{https://arxiv.org/abs/0910.1811}

\bibitem{Pollmann2012}
F. Pollmann, E. Berg, A.M. Turner, and M. Oshikawa,
{\em Symmetry protection of topological phases in one-dimensional quantum
spin systems}\/,
Phys. Rev. B {\bf 85}, 075125 (2012).
\\\url{https://arxiv.org/abs/0909.4059}

\bibitem{PG2008}
D. P\'erez-Garc\'ia, M.M. Wolf, M. Sanz, F. Verstraete, and J.I. Cirac,
{\em String order and symmetries in quantum spin lattices}\/,
Phys. Rev. Lett. {\bf 100}, 167202 (2008).
\\\url{https://arxiv.org/abs/0802.0447}

\bibitem{ChenGuWen2011}
X. Chen, Z.-C. Gu, and X.-G. Wen,
{\em Classification of gapped symmetric phases in one-dimensional spin
systems}\/,
Phys. Rev. B {\bf 83}, 035107 (2011).
\\\url{https://arxiv.org/abs/1008.3745}

\bibitem{Schuch2011}
N. Schuch, D. P\'erez-Garc\'ia, and I. Cirac,
{\em Classifying quantum phases using matrix product states and projected
entangled pair states}\/,
Phys. Rev. B {\bf 84}, 165139 (2011).
\\\url{https://arxiv.org/abs/1010.3732}


\bibitem{BachmannNachtergaele2014}
S. Bachmann and B. Nachtergaele,
{\em On gapped phases with a continuous symmetry and boundary operators}\/,
J. Stat. Phys. {\bf 154}, 91--112 (2014).
\\\url{https://arxiv.org/abs/1307.0716}

\bibitem{Tasaki2018}
H. Tasaki,
{\em Topological phase transition and $\bbZ_2$ index for $S=1$ quantum
spin chains}\/,
Phys. Rev. Lett. {\bf 121}, 140604 (2018).
\\\url{https://arxiv.org/abs/1804.04337}

\bibitem{Ogata2020}
Y. Ogata,
{\em A $\bbZ_2$-index of symmetry protected topological phases with time
reversal symmetry for quantum spin chains}\/,
Commun. Math. Phys. {\bf 374}, 705--734 (2020).
\\\url{https://arxiv.org/abs/1810.01045}

\bibitem{Matsui2001}
T. Matsui,
{\em The split property and the symmetry breaking of the quantum spin
chain}\/,
Commun. Math. Phys. {\bf 218}, 393--416 (2001).

\bibitem{Matsui2013}
T. Matsui,
{\em Boundedness of entanglement entropy and split property of quantum
spin chains}\/,
Rev. Math. Phys. {\bf 25}, 1350017 (2013).
\\\url{https://arxiv.org/abs/1109.5778}

\bibitem{Hastings2007}
M.B. Hastings,
{\em An area law for one-dimensional quantum systems}\/,
J. Stat. Mech. P08024 (2007).
\\\url{https://arxiv.org/abs/0705.2024}

\bibitem{Ogata2021}
Y. Ogata,
{\em A classification of pure states on quantum spin chains satisfying
the split property with on-site finite group symmetries}\/,
Trans. Amer. Math. Soc. Ser. B {\bf 8}, 39--65 (2021).
\\\url{https://doi.org/10.1090/btran/51}

\bibitem{OgataSPT}
Y. Ogata,
{\em Classification of symmetry protected topological phases in quantum
spin chains}\/,
in {\em Current Developments in Mathematics 2020},
pp.~41--104 (International Press, 2022).
\\\url{https://doi.org/10.4310/CDM.2020.v2020.n1.a2}

\bibitem{OgataICM}
Y. Ogata,
{\em Classification of gapped ground state phases in quantum spin systems}\/,
in {\em International Congress of Mathematicians 2022},
Vol.~5, pp.~4142--4161 (EMS Press, 2023).
\\\url{https://ems.press/books/standalone/277/5549}

\bibitem{BMNS2012}
S. Bachmann, S. Michalakis, B. Nachtergaele, and R. Sims,
{\em Automorphic equivalence within gapped phases of quantum lattice
systems}\/,
Commun. Math. Phys. {\bf 309}, 835--871 (2012).
\\\url{https://arxiv.org/abs/1102.0842}

\bibitem{MoonOgata2020}
A. Moon and Y. Ogata,
{\em Automorphic equivalence within gapped phases in the bulk}\/,
J. Funct. Anal. {\bf 278}, 108422 (2020).
\\\url{https://arxiv.org/abs/1906.05479}

\bibitem{LSM1961}
E. Lieb, T. Schultz, and D. Mattis,
{\em Two soluble models of an antiferromagnetic chain}\/,
Ann. Phys. {\bf 16}, 407--466 (1961).

\bibitem{AffleckLieb1986}
I. Affleck and E.H. Lieb,
{\em A proof of part of Haldane's conjecture on spin chains}\/,
Lett. Math. Phys. {\bf 12}, 57--69 (1986).

\bibitem{NakamuraTodo2002}
M. Nakamura and S. Todo,
{\em Order Parameter to Characterize Valence-Bond-Solid States in Quantum
Spin Chains}\/,
Phys. Rev. Lett. {\bf 89}, 077204 (2002).
\\\url{https://arxiv.org/abs/cond-mat/0112377}

\bibitem{TasakiLocalLSM}
H. Tasaki,
{\em Lieb--Schultz--Mattis theorem with a local twist for general
one-dimensional quantum systems}\/,
J. Stat. Phys. {\bf 170}, 653--671 (2018).
\\\url{https://arxiv.org/abs/1708.05186}

\bibitem{Tasaki2023}
H. Tasaki,
{\em Rigorous index theory for one-dimensional interacting topological
insulators}\/,
J. Math. Phys. {\bf 64}, 041903 (2023).
\\\url{https://arxiv.org/abs/2111.07335}

\bibitem{Huang2022}
H.-Y. Huang, R. Kueng, G. Torlai, V.V. Albert, and J. Preskill,
{\em Provably efficient machine learning for quantum many-body problems}\/,
Science {\bf 377}, eabk3333 (2022).
\\\url{https://arxiv.org/abs/2106.12627}

\bibitem{Tasaki2025}
H. Tasaki,
{\em Ground state of the $S=1$ antiferromagnetic Heisenberg chain is
topologically nontrivial if gapped}\/,
Phys. Rev. Lett. {\bf 134}, 076602 (2025).
\\\url{https://arxiv.org/abs/2407.17041}


\bibitem{BR1}
O. Bratteli and D.W. Robinson,
{\em Operator Algebras and Quantum Statistical Mechanics 1:
$C^*$- and $W^*$-Algebras. Symmetry Groups. Decomposition of States}\/,
2nd ed. (Springer, 1987).

\bibitem{BR2}
O. Bratteli and D.W. Robinson,
{\em Operator Algebras and Quantum Statistical Mechanics 2:
Equilibrium States. Models in Quantum Statistical Mechanics}\/,
2nd ed. (Springer, 1997).

\bibitem{Naaijkens2017}
P. Naaijkens,
{\em Quantum Spin Systems on Infinite Lattices: A Concise Introduction}\/,
Lecture Notes in Physics, Vol.~933 (Springer, 2017).
\\\url{https://arxiv.org/abs/1311.2717}

\bibitem{TasakiLSMReview}
H. Tasaki,
{\em The Lieb--Schultz--Mattis theorem: A topological point of view}\/,
in R.L. Frank, A. Laptev, M. Lewin, and R. Seiringer, eds.,
{\em The Physics and Mathematics of Elliott Lieb}, Vol.~2,
pp.~405--446 (EMS Press, 2022).
\\\url{https://arxiv.org/abs/2202.06243}


\bibitem{TechnicalNote}
H. Tasaki,
{\em Twist indices for quantum spin chains: A technical note on cut
 dependence and higher-rank symmetries}\/,
technical note in preparation.

\bibitem{SuYaoFurusaki}
H. Su, Y. Yao, and A. Furusaki,
{\em Exact quantization of topological order parameter in $\mathrm{SU}(N)$
spin models, $N$-ality transformation and ingappabilities}\/,
Phys. Rev. Lett. {\bf 133}, 266705 (2024).
\\\url{https://arxiv.org/abs/2406.05407}

\end{thebibliography}
\end{document}